\newif\ifarxiv
\arxivtrue 

\ifarxiv
	\documentclass[12pt]{article}
	\usepackage[margin=1in]{geometry}
	\usepackage[slantedGreek]{mathpazo}
	\usepackage[pdftex]{graphicx}
	\usepackage{amsfonts}
	\usepackage{amssymb}
	\usepackage{amsthm}
	\usepackage{graphicx,float}
	\usepackage{amsmath}%
	\usepackage{hyperref}
	\usepackage{natbib}
	\usepackage{booktabs}
\else
\fi
\usepackage{color,soul,marginnote}

\newtheorem{proposition}{Proposition}
\newtheorem{remark}{Remark}
\newtheorem{example}{Example}

\ifarxiv
\title{Synthetic Priors}
\author{
	\makebox[.4\linewidth]{Nick Polson}\\\textit{  Booth School of Business}\\\textit{  University of Chicago}\\\and
	\makebox[.4\linewidth]{Vadim Sokolov}\\\textit{ Department of Systems Engineering }\\	\textit{  and Operations Research}\\\textit{ George Mason University}
}
\date{First Draft: \\This Draft: \today}
\else
\fi

\graphicspath{{./fig/}}

\begin{document}
\ifarxiv
\maketitle
\begin{abstract}
Bayesian inference in generalized linear models requires a prior on the
coefficient vector $\beta$. Practitioners naturally reason about response
probabilities at specific covariate values, not about abstract log-odds
parameters. We develop \textit{synthetic priors}: informative Bayesian priors
for GLMs grounded in Good's \citeyearpar{good1950} device of imaginary
observations---the principle that every conjugate prior is equivalent to a
likelihood on pseudo-data from the same exponential family. The conditional
means prior of \citet{bedrick1996} elicits independent Beta priors on the
conditional mean response at $p$ expert-chosen design points; the induced prior
on $\beta$ is a product of binomial likelihoods at synthetic data points.
Combined with P\'{o}lya-Gamma data augmentation \citep{polson2013}, the posterior
admits an exact conjugate Gibbs sampler---no tuning, no Metropolis step---by
treating the augmented dataset as a standard logistic regression. We show that
ridge regression and catalytic priors \citep{huang2020} are instances of
Good's device, and identify prediction-powered inference
\citep{angelopoulos2023ppi} as a structural analogue in the frequentist
setting---all three mediate a variance--bias tradeoff through a single
informativeness parameter.
We illustrate the approach on two benchmark problems: the Challenger O-ring
data \citep{dalal1989}, where the BCJ prior provides a more moderate posterior
predictive at the 31°F launch temperature; and a Phase~II atopic dermatitis
dose-finding trial ($n = 300$), where the synthetic prior narrows 95\% credible
intervals by 3--6\% and raises decision probabilities by up to 2 percentage points
relative to a flat prior.
\end{abstract}
\else
\fi

\section{Introduction}
\label{sec:intro}

Prior elicitation is notoriously difficult.
Priors are the hallmark of Bayesian inference and can provide large efficiency
gains, but specifying them in applied work requires genuine domain expertise
that is rarely available on the natural parameter scale.
This difficulty is especially acute for logistic regression: a unit change in
$\beta_j$ corresponds to a multiplicative change in the odds, a scale that is
cognitively demanding to calibrate and sensitive to reparametrization.
Practitioners reason far more naturally about the probability of an event at a
specific covariate value: the placebo response rate in a clinical trial,
the default probability for a typical borrower, the click-through rate for a
reference advertisement.

\citet{bedrick1996} proposed the \textit{conditional means prior} (BCJ~prior):
choose $p$ design points $\tilde{x}_1, \ldots, \tilde{x}_p$ on the covariate
space, place independent Beta$(a_j, b_j)$ priors on the conditional mean
responses $\tilde{\mu}_j = \sigma(\tilde{x}_j^\top \beta)$, and derive the
implied prior on $\beta$ by the change of variables
$\beta = \tilde{X}^{-1}\operatorname{logit}(\tilde{\mu})$.
The key insight is that the induced prior on $\beta$ is exactly the likelihood of
$p$ synthetic binomial observations---$a_j$ successes out of $n_j = a_j + b_j$
trials at each design point $\tilde{x}_j$. Bayesian updating therefore reduces
to standard GLM computation on the combined real-plus-synthetic dataset.

The BCJ prior is \textit{predictive} in nature: elicitation happens on the
conditional mean scale $E[Y \mid X=x]$, the quantity a practitioner can assess
directly from prior trials or domain knowledge.
From an applied standpoint, one can often make such prior predictive
assessments naturally---for example, in clinical trials the LD50
(or ED50) is a standard target, and a practitioner can readily specify a
Beta prior on the probability of response at the maximum tolerated dose.
\citet{bedrick1996} illustrated this with the Challenger O-ring data;
we re-examine that example alongside a Phase~II atopic dermatitis dose-finding
trial to demonstrate the approach across settings.

We call this class \textit{synthetic priors}: priors for GLMs expressed as
likelihoods on pseudo-observations generated to encode expert knowledge.
This formulation has two immediate consequences. First, posterior computation is
exact: the augmented likelihood is a standard GLM likelihood, not an intractable
product. Second, P\'{o}lya-Gamma data augmentation \citep{polson2013} applies
without modification to the combined dataset, yielding an exact conjugate Gibbs
sampler with no tuning parameters.

The idea of encoding a prior as imaginary observations extends well beyond
the Gaussian linear model.  For ridge regression this connection is
well-known: the $\ell_2$ penalty is equivalent to $p$ imaginary observations
at the origin, inducing $N(0,\sigma^2/\lambda)$ on each coefficient.
We extend this perspective to the full exponential family---specifically
to logistic regression and GLMs more broadly---using Good's device of
imaginary observations \citep{good1950}.  Synthetic priors thereby unify
ridge regression with the BCJ construction under a single principle.

Synthetic priors also connect to two recent frameworks.  \citet{huang2020}
propose \textit{catalytic priors} that regularize by augmenting with data
generated from a simpler model's predictive distribution, tunable by a scalar
$\tau$. \citet{angelopoulos2023ppi} develop \textit{prediction-powered inference}
(PPI), a frequentist framework that combines a small labeled dataset with a large
ML-predicted dataset via a rectified estimator. All three share a common
structure: a primary dataset is augmented with an auxiliary source, and a single
scalar parameter controls the weight placed on that source.

This paper makes four contributions:
\begin{enumerate}
  \item We identify Good's device of imaginary observations as the theoretical
  foundation for synthetic priors: every conjugate prior is a likelihood on
  pseudo-data from the same exponential family.  Ridge regression and the BCJ
  prior are both instances of Good's device, the former at unit basis vectors
  with a Gaussian prior, the latter at expert-chosen design points with Beta
  priors.

  \item We show that combining the BCJ prior with P\'{o}lya-Gamma augmentation
  yields an exact conjugate Gibbs sampler in which the prior synthetic
  observations enter identically to real observations, with weight determined
  by the prior sample sizes $n_j = a_j + b_j$.

  \item We arrange ridge regression, the BCJ prior, catalytic priors, and
  prediction-powered inference into a hierarchy of synthetic prior frameworks,
  all mediating a variance--bias tradeoff through a single informativeness
  parameter.

  \item We demonstrate the approach in two applications: the Challenger
  O-ring data of \citet{bedrick1996}, where the BCJ prior moderates the
  flat posterior's extreme extrapolation at 31°F; and a Phase~II atopic
  dermatitis dose-finding trial, where a two-point BCJ prior narrows
  credible intervals by 3--6\% and raises decision probabilities by up to
  2 percentage points relative to a flat prior.
\end{enumerate}

Section~\ref{sec:bcj} develops synthetic priors via Good's device and the
BCJ conditional means prior, with the P\'{o}lya-Gamma Gibbs sampler in
Section~\ref{sec:pg}. Section~\ref{sec:unify} connects synthetic
priors to ridge regression, catalytic priors, and prediction-powered inference.
Section~\ref{sec:application} applies the approach to the Challenger
O-ring data and a Phase~II atopic dermatitis dose-finding trial. Section~\ref{sec:discussion} discusses extensions.

\section{Synthetic Priors via Conditional Predictive Means}
\label{sec:bcj}

\subsection{Good's Device and the BCJ Prior}
\label{sec:good}

\citet{good1950} proposed a principle for prior elicitation: every
prior should be interpretable as the posterior from a hypothetical past
experiment.  \citet{jeffreys1961} applied the same idea in specifying
priors for location--scale models, encoding prior knowledge as the result
of imaginary observations.
For exponential family models, this interpretation is exact.

\begin{proposition}[Prior--data equivalence]
\label{prop:good}
Let $L(\theta \mid y) = h(y)\exp\{\langle\theta, T(y)\rangle - A(\theta)\}$
be an exponential family likelihood with natural parameter $\theta$ and
sufficient statistic $T(y)$.  The conjugate prior
\[
  p(\theta \mid \nu_0, t_0) \propto \exp\bigl\{\langle\theta,\,\nu_0
  t_0\rangle - \nu_0 A(\theta)\bigr\}
\]
is proportional to the likelihood of $\nu_0$ observations with sufficient
statistic $t_0$.  The posterior after observing $n$ data points with total
sufficient statistic $T = \sum_{i=1}^n T(y_i)$ is the same conjugate family
with updated parameters $(\nu_0 + n,\; (\nu_0 t_0 + T)/(\nu_0 + n))$.
\end{proposition}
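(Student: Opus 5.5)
The plan is a direct computation using the exponential-family form; there is no deep obstacle, only bookkeeping about what ``$\nu_0$ observations'' means when $\nu_0$ need not be an integer. First, I will write out the likelihood of $m$ i.i.d.\ observations $y_1,\dots,y_m$ from the family:
\[
  \prod_{i=1}^m L(\theta \mid y_i) = \Bigl(\prod_{i=1}^m h(y_i)\Bigr)\exp\Bigl\{\Bigl\langle\theta,\sum_{i=1}^m T(y_i)\Bigr\rangle - m A(\theta)\Bigr\}.
\]
As a function of $\theta$, this depends on the data only through $m$ and the total sufficient statistic $\sum_i T(y_i)$. The factor $\prod h(y_i)$ is constant in $\theta$.

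Second, I will match this to the conjugate kernel by setting $m=\nu_0$ and $\sum_i T(y_i)=\nu_0 t_0$, i.e.\ pseudo-data whose average sufficient statistic is $t_0$. This gives exactly $\exp\{\langle\theta,\nu_0 t_0\rangle-\nu_0A(\theta)\}$ up to a $\theta$-free constant, which is the first claim. The one point needing care is that $\nu_0$ may be non-integer and $t_0$ need not be an attainable value of $T(y)$. I will handle this by noting that the statement is about proportionality of kernels in $\theta$. The expression $\exp\{\langle\theta,\nu_0t_0\rangle-\nu_0A(\theta)\}$ is the natural continuous extension of the $m$-observation likelihood, so ``$\nu_0$ observations with sufficient statistic $t_0$'' is interpreted at the level of the kernel. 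This is the same convention the paper uses later when Beta$(a_j,b_j)$ with real $a_j,b_j$ is read as $a_j$ successes in $a_j+b_j$ trials. I expect this interpretive remark to be the only delicate step. I will also note that propriety of the prior requires $\nu_0>0$ and $t_0$ in the interior of the convex hull of the support of $T$, but the proportionality claim does not need it.

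Third, for the posterior I will apply Bayes' rule and multiply the prior kernel by the likelihood of the $n$ real observations, dropping $\prod h(y_i)$:
\[
  p(\theta\mid y)\propto \exp\bigl\{\langle\theta,\nu_0t_0+T\rangle-(\nu_0+n)A(\theta)\bigr\}.
\]
I will then rewrite $\nu_0t_0+T=(\nu_0+n)\cdot\frac{\nu_0t_0+T}{\nu_0+n}$, which exhibits the posterior as the conjugate family with $\nu_n=\nu_0+n$ and $t_n=(\nu_0t_0+T)/(\nu_0+n)$. Finally, since the kernel determines the normalized density whenever it is integrable, the posterior is a member of the same family, with the updated parameters as stated.
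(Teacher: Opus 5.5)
Your proposal is correct. The paper states this proposition without proof, treating it as the standard conjugacy computation, and your argument is exactly that computation: form the $m$-observation likelihood, match it to the kernel with $m=\nu_0$ and total statistic $\nu_0 t_0$, then multiply by the data likelihood and regroup $\nu_0 t_0 + T = (\nu_0+n)\,t_n$. Your remark that ``$\nu_0$ observations'' must be read at the level of the $\theta$-kernel when $\nu_0$ is non-integer, as in the paper's Beta$(a_j,b_j)$ reading, is a worthwhile clarification that the paper's statement leaves implicit.
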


Good's device reframes hyperparameter choice as a concrete question:
\textit{how many imaginary observations, with what summary statistic,
would justify this degree of prior belief?}
The effective prior sample size $\nu_0$ quantifies confidence relative to
the anticipated data size $n$.

\begin{example}[Beta--Binomial]
Under $Y \mid \theta \sim \operatorname{Binomial}(n, \theta)$ with
$\theta \sim \operatorname{Beta}(\alpha, \beta)$, the posterior mean is
the precision-weighted average
\[
  \mathbb{E}[\theta \mid y] = \frac{\kappa}{\kappa + n}\,\mu
    + \frac{n}{\kappa + n}\,\hat\theta_{\mathrm{MLE}},
\]
where $\mu = \alpha/\kappa$ is the prior mean, $\kappa = \alpha + \beta$,
and $\hat\theta_{\mathrm{MLE}} = s/n$.  Proposition~\ref{prop:good} identifies
$\nu_0 = \kappa$ as the effective prior sample size: the prior is
equivalent to $\kappa$ imaginary trials with $\kappa\mu$ imaginary
successes.
\end{example}

The BCJ prior applies Good's device to logistic
regression: each design point $\tilde{x}_j$ induces an imaginary binomial
experiment with $n_j = a_j + b_j$ trials and success rate $a_j/n_j$.

Let $y_i \mid x_i, \beta \sim \operatorname{Bernoulli}(\mu_i)$ with
$\mu_i = \sigma(x_i^\top \beta)$, where $\sigma(t) = (1 + e^{-t})^{-1}$ is the
logistic function, $x_i \in \mathbb{R}^p$, and $\beta \in \mathbb{R}^p$.
The log-likelihood is
\begin{equation}
  \ell(\beta) = \sum_{i=1}^n \bigl[ y_i x_i^\top \beta
    - \log(1 + e^{x_i^\top \beta}) \bigr].
  \label{eq:loglik}
\end{equation}

Choose $p$ design points $\tilde{x}_1, \ldots, \tilde{x}_p \in \mathbb{R}^p$
such that the matrix $\tilde{X} = (\tilde{x}_1, \ldots, \tilde{x}_p)^\top$
is invertible. Place independent priors
\begin{equation}
  \tilde{\mu}_j \sim \operatorname{Beta}(a_j, b_j),
  \quad j = 1, \ldots, p,
  \label{eq:beta_prior}
\end{equation}
on the conditional mean responses $\tilde{\mu}_j = \sigma(\tilde{x}_j^\top\beta)$.
The joint prior on $\beta$ induced by the change of variables is:

\begin{equation}
  \pi(\beta) \propto \prod_{j=1}^p
    \sigma(\tilde{x}_j^\top\beta)^{a_j}
    \bigl[1 - \sigma(\tilde{x}_j^\top\beta)\bigr]^{b_j}.
  \label{eq:bcj}
\end{equation}

\begin{proposition}[Data augmentation representation, \citealt{bedrick1996}]
The prior \eqref{eq:bcj} is the likelihood of $p$ independent binomial
observations, each with $a_j$ successes out of $n_j = a_j + b_j$ trials at
design point $\tilde{x}_j$.
\end{proposition}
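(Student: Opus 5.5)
The plan has two parts: derive \eqref{eq:bcj} explicitly from the change of variables, then match it term by term with a binomial likelihood. Start from the independent Beta$(a_j,b_j)$ densities on $\tilde\mu_j$, whose joint density is proportional to $\prod_j \tilde\mu_j^{a_j-1}(1-\tilde\mu_j)^{b_j-1}$. Write the map $\beta \mapsto \tilde\mu$ as the composition of the linear map $\eta = \tilde X\beta$ with the coordinatewise logistic map $\tilde\mu_j = \sigma(\eta_j)$. Because $\tilde X$ is invertible, this is a bijection from $\mathbb{R}^p$ onto $(0,1)^p$. Its Jacobian is
\[
  \Bigl|\det \frac{\partial \tilde\mu}{\partial\beta}\Bigr|
  = |\det\tilde X| \prod_{j=1}^p \sigma'(\eta_j)
  = |\det\tilde X| \prod_{j=1}^p \tilde\mu_j(1-\tilde\mu_j),
\]
using $\sigma' = \sigma(1-\sigma)$.

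Multiplying the Beta density by this Jacobian, the factor $\tilde\mu_j(1-\tilde\mu_j)$ raises the exponents $a_j-1$ and $b_j-1$ to $a_j$ and $b_j$. The constant $|\det\tilde X|$ is absorbed into the proportionality. This gives exactly \eqref{eq:bcj}. I expect this bookkeeping to be the only delicate step: the proposition hinges on the Jacobian cancelling the ``$-1$'' in the Beta exponents.

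For the comparison, the binomial likelihood for $a_j$ successes in $n_j$ trials with success probability $\sigma(\tilde x_j^\top\beta)$ is
\[
  \binom{n_j}{a_j}\,\sigma(\tilde x_j^\top\beta)^{a_j}\bigl[1-\sigma(\tilde x_j^\top\beta)\bigr]^{n_j-a_j}.
\]
Taking the product over independent $j$ and dropping the $\beta$-free binomial coefficients reproduces \eqref{eq:bcj}, since $n_j - a_j = b_j$.

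To tie this to Proposition~\ref{prop:good}, rewrite each factor in natural-parameter form as $\exp\{a_j\,\tilde x_j^\top\beta - n_j\log(1+e^{\tilde x_j^\top\beta})\}$. This exhibits the same form as the log-likelihood \eqref{eq:loglik}, with pseudo-responses $a_j$ and weights $n_j$. Finally, note that $a_j$ and $b_j$ need not be integers: ``$a_j$ successes out of $n_j$ trials'' is interpreted through the kernel alone, which is how the representation is used in the subsequent Gibbs sampler.
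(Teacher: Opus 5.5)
Your proposal is correct and follows essentially the same route as the paper: the paper obtains \eqref{eq:bcj} from the change of variables $\beta = \tilde{X}^{-1}\operatorname{logit}(\tilde{\mu})$ and reads it off as a product of binomial kernels, written in natural-parameter form in \eqref{eq:logpost}, while deferring the details to \citet{bedrick1996}. Your explicit Jacobian computation, in which $\sigma' = \sigma(1-\sigma)$ cancels the $-1$ in the Beta exponents, fills in the step the paper leaves unstated, and your remark on non-integer $a_j, b_j$ is a sensible clarification.
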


This is the synthetic-data interpretation: the expert's beliefs act as if a
previous experiment had been run with $n_j$ subjects at each design point, with
$a_j$ successes. The posterior combines evidence from this hypothetical
experiment with the observed data $\mathcal{D} = \{(y_i, x_i)\}_{i=1}^n$:
\begin{equation}
  \log \pi(\beta \mid \mathcal{D}) =
    \sum_{i=1}^n \bigl[ y_i x_i^\top \beta - \log(1 + e^{x_i^\top\beta})\bigr]
    + \sum_{j=1}^p \bigl[ a_j \tilde{x}_j^\top\beta
      - n_j \log(1 + e^{\tilde{x}_j^\top\beta})\bigr].
  \label{eq:logpost}
\end{equation}

The Beta parameters $(a_j, b_j)$ elicit directly on the probability scale:
$a_j / n_j$ is the prior mean response at $\tilde{x}_j$, and $n_j = a_j + b_j$
is the effective prior sample size at that design point. A practitioner who
believes the placebo response rate is approximately 10\%, with confidence
equivalent to 10 prior observations, sets $a_j = 1$ and $b_j = 9$.

\citet{hanson2014} develop an alternative observable-scale approach using
$g$-priors, in which a single scalar $g$ controls global shrinkage toward
an expert-specified baseline probability; the BCJ prior allows separate weights
$n_j$ at each design point and is therefore more flexible when beliefs differ
substantially across the covariate space.

The prior \eqref{eq:bcj} is proper whenever $\tilde{X}$ has full rank $p$ and
each $n_j \geq 1$ \citep{bedrick1996}. Properness does not require $p$ to equal
the number of covariates: one can augment with more than $p$ synthetic
observations, in which case the prior is over-determined and resembles a ridge
penalty centered at the expert mean.

\begin{remark}[Nonlinear parameters]
In a dose-response model $\text{logit}\,P(y|d) = \beta_0 + \beta_1 d$,
the ED50 (effective dose for 50\% response, known as the LD50 in toxicology)
satisfies $\text{ED50} = -\beta_0 / \beta_1$, a nonlinear function of $\beta$
and a key regulatory target in Phase~II dose-finding trials.
Placing a synthetic observation at the dose $d_{50}$ where the expected
response is 0.5 regularizes toward parameter configurations consistent
with the expert's ED50 belief, without specifying a prior directly on the
nonlinear quantity.
\end{remark}

\begin{remark}[Ridge regression as Good's device]
Ridge regression is the Gaussian linear analogue of the BCJ prior: the
$\ell_2$ penalty $\lambda\|\beta\|^2$ encodes $p$ imaginary observations at
the unit design points $e_j$ with response~0, inducing the prior
$\beta \sim N(0,\,\sigma^2/\lambda\cdot I_p)$.
Section~\ref{sec:ridge} develops this connection formally.
\end{remark}

\subsection{P\'{o}lya-Gamma Data Augmentation}
\label{sec:pg}

The augmented log-posterior \eqref{eq:logpost} is a logistic regression
log-likelihood on $n + p$ observations. P\'{o}lya-Gamma augmentation
\citep{polson2013} converts this into a conditionally Gaussian model.

\begin{proposition}[P\'{o}lya-Gamma Identity, \citealt{polson2013}]
\label{prop:pg}
For $b > 0$, $a \in \mathbb{R}$, and $\psi \in \mathbb{R}$,
\begin{equation}
  \frac{(e^\psi)^a}{(1 + e^\psi)^b}
  = 2^{-b}\, e^{\kappa \psi}
    \int_0^\infty e^{-\omega \psi^2 / 2}\, p(\omega;\, b,\, 0)\,d\omega,
  \label{eq:pg_identity}
\end{equation}
where $\kappa = a - b/2$ and $\omega \sim \operatorname{PG}(b, 0)$.
\end{proposition}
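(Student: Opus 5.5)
The plan is to reduce the identity to a single Laplace-transform fact for the P\'{o}lya-Gamma law, namely
\[
  \int_0^\infty e^{-\omega t^2/2}\, p(\omega;\,b,\,0)\,d\omega = \cosh^{-b}(t/2), \qquad t \in \mathbb{R},
\]
and then finish with elementary algebra on the left-hand side of \eqref{eq:pg_identity}. I will take the \citet{polson2013} definition $\omega \overset{d}{=} \frac{1}{2\pi^2}\sum_{k\ge1} g_k/\{(k-1/2)^2\}$ with $g_k \stackrel{iid}{\sim}\operatorname{Gamma}(b,1)$ as the starting point.

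\textbf{Algebraic reduction.} Factor $e^{\psi/2}$ out of the denominator:
\[
  1+e^\psi = e^{\psi/2}\bigl(e^{-\psi/2}+e^{\psi/2}\bigr) = 2e^{\psi/2}\cosh(\psi/2).
\]
This gives
\[
  \frac{(e^\psi)^a}{(1+e^\psi)^b} = 2^{-b} e^{(a-b/2)\psi}\cosh^{-b}(\psi/2) = 2^{-b}e^{\kappa\psi}\cosh^{-b}(\psi/2).
\]
Once the displayed Laplace identity is substituted with $t=\psi$, this is exactly \eqref{eq:pg_identity}. Everything therefore hinges on that identity, which is where I expect the real work to lie.

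\textbf{Laplace transform.} Using the series definition and independence, for $s\ge 0$,
\[
  \mathbb{E}\,e^{-s\omega} = \prod_{k\ge1}\Bigl(1+\frac{s}{2\pi^2(k-1/2)^2}\Bigr)^{-b}.
\]
This uses the Gamma Laplace transform $\mathbb{E}e^{-ug_k}=(1+u)^{-b}$. Passing the expectation through the infinite sum is justified by monotone convergence, since all terms are nonnegative and $e^{-s\omega}$ is a decreasing limit of the partial-product expectations. Next, take $s=t^2/2$ and invoke the Weierstrass product
\[
  \cosh z = \prod_{k\ge1}\Bigl(1+\frac{z^2}{\pi^2(k-1/2)^2}\Bigr).
\]
With $z=t/2$ each factor becomes $1+t^2/\{4\pi^2(k-1/2)^2\}$, which matches the product above term by term. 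Hence $\mathbb{E}e^{-\omega t^2/2}=\cosh^{-b}(t/2)$.

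\textbf{Main obstacle.} The delicate point is the infinite-product step. I must justify interchanging expectation with the infinite sum, and check that the $\cosh$ product converges absolutely. Convergence holds because $\sum_k (k-1/2)^{-2}<\infty$, and the product form itself is the standard Hadamard factorization of $\cosh$ via its zeros at $z=\pm i\pi(k-1/2)$. Everything else is the routine algebra in the reduction step.
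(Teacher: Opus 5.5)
Your proof is correct and follows the standard argument of \citet{polson2013}; the paper gives no proof of its own and simply cites that source. The argument combines the factorization $1+e^\psi=2e^{\psi/2}\cosh(\psi/2)$ with the Laplace transform $\mathbb{E}e^{-\omega t^2/2}=\cosh^{-b}(t/2)$, which comes from the Gamma-series definition and the Weierstrass product for $\cosh$. One phrase in your limit interchange should be corrected: it is the random variables $e^{-s\omega_N}$, built from the partial sums $\omega_N$, that decrease to $e^{-s\omega}$, not the partial-product expectations. Since these random variables are bounded by $1$, dominated convergence then gives $\mathbb{E}e^{-s\omega}=\lim_N\prod_{k\le N}(\cdot)^{-b}$.
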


Conditioning on $\omega$, the logistic factor is a Gaussian function of
$\psi = x^\top\beta$:
\[
  \log p(y \mid x, \beta, \omega)
  = \kappa\, x^\top\beta - \tfrac{\omega}{2}(x^\top\beta)^2 + \text{const.}
\]
Combined with the Gaussian conjugate structure, this yields a two-step Gibbs
sampler.

\medskip
\noindent\textbf{Algorithm 1} (BCJ + P\'{o}lya-Gamma Gibbs).
\textit{Form the augmented dataset by stacking real and synthetic observations:
$X_{\mathrm{all}} = [X^\top, \tilde{X}^\top]^\top$,
$b_{\mathrm{all},i} = 1$ for real observations and
$b_{\mathrm{all},j} = n_j$ for synthetic observations, and
$\kappa_{\mathrm{all},i} = y_i - \tfrac{1}{2}$ for real observations and
$\kappa_{\mathrm{all},j} = a_j - n_j/2$ for synthetic observations.
Initialize $\beta^{(0)}$. At iteration $t$:}
\begin{enumerate}
  \item[\textit{Step 1.}] \textit{Sample latent variables.} For each row $i$
  of $X_{\mathrm{all}}$:
  \[
    \omega_i^{(t)} \mid \beta^{(t-1)} \sim
    \operatorname{PG}\!\bigl(b_{\mathrm{all},i},\;
    x_{\mathrm{all},i}^\top \beta^{(t-1)}\bigr).
  \]

  \item[\textit{Step 2.}] \textit{Sample coefficients.} Let
  $\Omega^{(t)} = \operatorname{diag}(\omega_1^{(t)}, \ldots)$. Then
  \[
    \beta^{(t)} \mid \Omega^{(t)} \sim
    \mathcal{N}\!\bigl(m_n,\; V_n\bigr),
  \]
  where
  \[
    V_n^{-1} = X_{\mathrm{all}}^\top \Omega^{(t)} X_{\mathrm{all}},
    \qquad
    m_n = V_n\, X_{\mathrm{all}}^\top\, \kappa_{\mathrm{all}}.
  \]
\end{enumerate}

Each step is an exact draw; no Metropolis correction is required.
The precision matrix $V_n^{-1}$ decomposes additively as
\[
  V_n^{-1} = \underbrace{X^\top \Omega_{\mathrm{real}} X}_{\text{real data}}
           + \underbrace{\tilde{X}^\top \tilde{\Omega} \tilde{X}}_{\text{BCJ prior}},
\]
revealing that the BCJ prior contributes a positive-semidefinite increment to
the information matrix whose magnitude is determined by the prior sample sizes
$n_j$.

In R, exact P\'{o}lya-Gamma samples are drawn using \texttt{BayesLogit::rpg()}
\citep{polson2013}. The sampler requires no adaptation period and converges
rapidly in practice because the P\'{o}lya-Gamma variables are conditionally
independent given $\beta$.

\section{Connections with Previous Work}
\label{sec:unify}

Synthetic priors connect to several existing frameworks.
Ridge regression, the BCJ prior, catalytic priors \citep{huang2020}, and
prediction-powered inference \citep{angelopoulos2023ppi} all augment a primary
inference problem with an auxiliary source of information, with a scalar
parameter controlling the weight on that source.
Table~\ref{tab:compare} summarizes the hierarchy.

\subsection{Ridge Regression: The Gaussian Linear Case}
\label{sec:ridge}

The simplest synthetic prior arises in linear regression.
Suppose $y_i = x_i^\top\beta + \varepsilon_i$ with $\varepsilon_i \sim N(0,\sigma^2)$.
Ridge regression \citep{hoerl1970} adds $\ell_2$ penalization:
\begin{equation}
  \hat\beta_{\mathrm{ridge}} = \operatorname*{arg\,min}_\beta
    \bigl\{\|y - X\beta\|^2 + \lambda\|\beta\|^2\bigr\}.
  \label{eq:ridge}
\end{equation}

\begin{proposition}[Ridge as augmented OLS]
\label{prop:ridge}
$\hat\beta_{\mathrm{ridge}}$ equals the ordinary least-squares estimator on
the augmented dataset
$\bar{X} = [X^\top,\; \sqrt{\lambda}\,I_p]^\top$,
$\bar{y} = [y^\top,\; 0_p^\top]^\top$.
\end{proposition}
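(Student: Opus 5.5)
The plan is to compare objective functions directly: the OLS objective on the augmented data $(\bar X,\bar y)$ coincides with the ridge objective in \eqref{eq:ridge}, so the two minimizers are the same. Write the augmented residual vector in block form as
\[
  \bar y - \bar X\beta = \begin{pmatrix} y - X\beta \\ 0_p - \sqrt{\lambda}\,I_p\beta \end{pmatrix}.
\]
Its squared Euclidean norm is the sum of the squared norms of the two blocks,
\[
  \|\bar y - \bar X\beta\|^2 = \|y - X\beta\|^2 + \lambda\|\beta\|^2 .
\]
This holds identically in $\beta$, so the augmented least-squares criterion and the ridge criterion are the same function. Their argmin sets therefore coincide.

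To make the statement about \emph{the} OLS estimator well defined, I will also check that the augmented problem has a unique minimizer. The Gram matrix is
\[
  \bar X^\top\bar X = X^\top X + \lambda I_p .
\]
For $\lambda>0$ this matrix is positive definite regardless of the rank of $X$, so the normal equations have a unique solution. The cross term is
\[
  \bar X^\top\bar y = X^\top y + \sqrt{\lambda}\,I_p\,0_p = X^\top y .
\]
Hence the augmented OLS estimator is
\[
  \hat\beta = (X^\top X+\lambda I_p)^{-1}X^\top y .
\]
Setting the gradient of \eqref{eq:ridge} to zero gives the same equation, $(X^\top X+\lambda I_p)\beta = X^\top y$, which confirms that the two estimators are equal.

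The only point that needs any care, and it is minor, is the degenerate case $\lambda = 0$ with rank-deficient $X$. There neither estimator is unique and the statement should be read as equality of solution sets. I would state the assumption $\lambda>0$ explicitly.

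Finally, I would tie this to Good's device (Proposition~\ref{prop:good}). Each appended row $\sqrt{\lambda}\,e_j$ with response $0$ is an imaginary observation. Under the Gaussian likelihood, that row contributes the factor $\exp\{-\lambda\beta_j^2/(2\sigma^2)\}$, which is exactly the $N(0,\sigma^2/\lambda)$ prior on $\beta_j$ anticipated in the earlier remark.
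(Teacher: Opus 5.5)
Your proof is correct and takes the same approach as the paper, whose proof is exactly your block-norm identity $\|\bar y - \bar X\beta\|^2 = \|y - X\beta\|^2 + \lambda\|\beta\|^2$, so the two objectives and hence their minimizers coincide. Your normal-equations check $(X^\top X + \lambda I_p)\beta = X^\top y$ and the explicit $\lambda>0$ assumption for uniqueness are sound refinements that the paper leaves implicit.
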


\begin{proof}
The augmented objective is
$\|\bar{y} - \bar{X}\beta\|^2 = \|y - X\beta\|^2 + \lambda\|\beta\|^2$,
which matches \eqref{eq:ridge} term by term.
\end{proof}

The $p$ augmented rows $(\sqrt{\lambda}\,e_j,\; 0)$ are synthetic
observations encoding the prior belief $\beta \sim N(0,\,(\sigma^2/\lambda)I_p)$.
By Proposition~\ref{prop:good}, each augmented row is one imaginary
observation at the unit design point $e_j$ with response 0, contributing
precision $\lambda$ per coordinate.  The precision matrix decomposes as
\[
  \bar{X}^\top\bar{X} = X^\top X + \lambda I_p
  = \underbrace{X^\top X}_{\text{real data}}
  + \underbrace{\lambda I_p}_{\text{prior}},
\]
identically to the BCJ decomposition in the P\'{o}lya-Gamma sampler
(Section~\ref{sec:pg}), with $\lambda I_p$ playing the role of
$\tilde{X}^\top\tilde\Omega\tilde{X}$.

\begin{table}[ht]
  \centering
  \caption{The synthetic prior hierarchy.}
  \label{tab:compare}
  \begin{tabular}{@{}lp{2cm}p{2cm}p{2cm}p{2.2cm}@{}}
    \toprule
    & Ridge (1970) & BCJ (1996) & Catalytic (2020) & PPI (2023) \\
    \midrule
    Framework       & Bayesian MAP & Bayesian    & Bayesian      & Frequentist \\
    Likelihood      & Gaussian     & Logistic    & GLM           & Any \\
    Synthetic data  & $N(0,\tau^2)$ at $e_j$ & Expert at $\tilde{x}_j$ & Simpler model & ML predictions \\
    Informativeness & $\lambda$    & $n_j$       & $\tau^{-1}$   & $\hat{\lambda}$ \\
    Nonlinear params & No          & Yes         & No            & No \\
    Validity        & MAP consist. & Post.\ consist. & Post.\ consist. & Coverage at finite $n$ \\
    \bottomrule
  \end{tabular}
\end{table}

\subsection{Catalytic Priors and Prediction-Powered Inference}

\citet{huang2020} construct the catalytic prior by fitting a simpler model $g$
(e.g., an intercept-only GLM) to the data, generating synthetic covariates
$X^* = (x_1^*, \ldots, x_M^*)$ from the observed covariate distribution, and
drawing synthetic responses $Y_j^*$ from $g$'s predictive distribution. The
prior is
\begin{equation}
  \pi_{\mathrm{cat}}(\beta) \propto
  \prod_{j=1}^M f(Y_j^* \mid x_j^*, \beta)^{1/\tau}.
  \label{eq:catalytic}
\end{equation}
The hyperparameter $\tau > 0$ downweights the synthetic observations; large
$\tau$ yields a diffuse, weakly informative prior. In linear regression,
straightforward algebra shows that \eqref{eq:catalytic} reduces to ridge
regression with the penalty centered at $\hat{\beta}_0$ from the simpler model
rather than at zero.

The structural difference from BCJ is the source of pseudo-data: catalytic priors
derive $Y^*$ from a fitted model, making the prior data-adaptive but not
genuinely informative in the sense of encoding external knowledge. BCJ priors
derive from expert elicitation and are informative by design.

\medskip
\noindent\textbf{Prediction-Powered Inference.}
\citet{angelopoulos2023ppi} address a frequentist problem: given a small labeled
dataset $\mathcal{D}_L$ and a large unlabeled dataset $\mathcal{D}_U$ with
ML-predicted labels $\hat{Y}$, construct valid confidence intervals for a
population parameter $\theta$. The PPI estimator is
\begin{equation}
  \hat{\theta}^{\mathrm{PPI}} =
  \hat{\theta}_U^{\mathrm{pred}}
  + \underbrace{\bigl(\hat{\theta}_L^{\mathrm{true}}
    - \hat{\theta}_L^{\mathrm{pred}}\bigr)}_{\text{rectifier}},
  \label{eq:ppi}
\end{equation}
where the rectifier debiases the large-sample estimate using a labeled subset.
PPI guarantees frequentist coverage regardless of ML model quality; when the ML
model is perfect, the rectifier vanishes and the estimator uses only
$\mathcal{D}_U$.

The PPI++ extension \citep{angelopoulos2023ppipp} selects the optimal scalar
weight $\hat{\lambda}$ on the rectifier by minimizing asymptotic variance,
making the variance--bias tradeoff explicit.

\subsection{The Control Variate Perspective}

All three frameworks admit a \textit{control-variate} interpretation: the
auxiliary source (expert beliefs, simpler model, ML predictions) provides a
high-variance but low-cost estimate, and the scalar weight ($n_j$, $1/\tau$,
$\hat{\lambda}$) controls how much variance reduction to accept at the cost of
possible bias. The frameworks differ in three dimensions: (i)~whether validity
requires the auxiliary source to be accurate (PPI: no; BCJ and catalytic: yes,
asymptotically); (ii)~whether the framework is Bayesian (BCJ, catalytic) or
frequentist (PPI); (iii)~whether the auxiliary source can supply information
from outside the observed covariate distribution (BCJ: yes; catalytic and PPI:
no).

\section{Application: Atopic Dermatitis Phase II Dose-Finding}
\label{sec:application}

\subsection{Clinical Setting}

We apply the BCJ + P\'{o}lya-Gamma sampler to a simulated Phase~II
dose-finding trial for a hypothetical IL-inhibitor for moderate-to-severe
atopic dermatitis. The design follows the DoseFinding R package's canonical
binary endpoint vignette \citep{bretz2005}.

\begin{itemize}
  \item \textbf{Endpoint.} IGA (Investigator Global Assessment) binary response
  (IGA 0/1 vs.\ 2--4).
  \item \textbf{Doses.} 0 (placebo), 0.5, 1.5, 2.5, 4.0~mg subcutaneous q2w.
  \item \textbf{Sample size.} $n = 60$ per arm, $N = 300$ total.
  \item \textbf{True model.} Emax on the logit scale:
  $\operatorname{logit} P(y=1 \mid d) =
  \operatorname{logit}(0.10) + [\operatorname{logit}(0.35) - \operatorname{logit}(0.10)]
  \cdot d/(0.5 + d)$.
\end{itemize}

The true placebo response rate is 10\% and the true top-dose response rate is
approximately 31\%.

\subsection{BCJ Prior Specification}

Prior beliefs are calibrated to published Phase~3 trial data on IL-4/13
inhibitors \citep{simpson2016}. The placebo response rate of 10\% matches
dupilumab trial placebo arms directly; the top-dose response of 30\% is set
below the dupilumab 300~mg q2w rate ($\approx$38\%) to reflect a hypothetical
agent with lower efficacy. Two design points anchor the dose-response:

\begin{table}[ht]
  \centering
  \caption{BCJ prior design points for the atopic dermatitis trial.}
  \label{tab:prior}
  \begin{tabular}{@{}lllll@{}}
    \toprule
    Design point & Prior & Prior mean & Weight \\
    \midrule
    Dose $= 0$~mg (placebo) & Beta$(1, 9)$ & $10\%$ & 10 pseudo-obs \\
    Dose $= 4$~mg (top dose) & Beta$(3, 7)$ & $30\%$ & 10 pseudo-obs \\
    \bottomrule
  \end{tabular}
\end{table}

The two synthetic observations enter the Gibbs sampler as binomial
observations: $(n_j = 10, a_j = 1)$ at placebo and $(n_j = 10, a_j = 3)$ at
the top dose. The P\'{o}lya-Gamma parameters are $b_j = n_j$ for synthetic
observations and $b_i = 1$ for real observations.

\subsection{Results}

The Gibbs sampler was run for 10,000 iterations with 3,000 burn-in; P\'{o}lya-Gamma
variates were drawn using the \texttt{BayesLogit} package \citep{polson2013}. Mean effective sample sizes
over 30 replicates were 3,590 for $\beta_0$ and 4,355 for $\beta_1$,
indicating low autocorrelation; Geweke $z$-scores were within $[-2,2]$ in all
30 replicates for both parameters, indicating no evidence of non-convergence.

Table~\ref{tab:results} reports posterior predicted IGA response rates averaged
over 30 independent replications.
Both priors yield nearly identical posterior means: BCJ and flat differ by at
most 0.3~pp in mean response at any dose, since 10~pseudo-observations per
design point represent approximately 3\% of the total effective information
(10 of 310 observations at each anchor).
The BCJ advantage is in variance reduction: 95\% credible interval widths are
3--6\% narrower, with the largest reduction at the top dose (6\%).
Both priors overestimate response at placebo (15.5--15.8\% vs.\ true 10\%) and
underestimate it at mid-doses; this is model misspecification bias from fitting
a linear logistic model to an Emax data-generating process.

\begin{table}[ht]
  \centering
  \caption{Posterior predicted IGA response rates, averaged over 30 simulated
  trials (SD in parentheses). CrI width is the mean width of the 95\% posterior
  credible interval.}
  \label{tab:results}
  \begin{tabular}{@{}llllll@{}}
    \toprule
    Dose (mg) & True $P$ & BCJ Post.\ Mean & Flat Post.\ Mean & BCJ CrI Width & Flat CrI Width \\
    \midrule
    0.0 & $10.0\%$ & $15.5\%\;(2.8)$ & $15.8\%\;(3.0)$ & $11.5\,\text{pp}$ & $12.0\,\text{pp}$ \\
    0.5 & $19.7\%$ & $17.1\%\;(2.6)$ & $17.4\%\;(2.8)$ & $10.6\,\text{pp}$ & $11.0\,\text{pp}$ \\
    1.5 & $26.6\%$ & $21.0\%\;(2.2)$ & $21.3\%\;(2.4)$ & $ 9.3\,\text{pp}$ & $ 9.6\,\text{pp}$ \\
    2.5 & $29.3\%$ & $25.6\%\;(2.2)$ & $25.9\%\;(2.4)$ & $10.6\,\text{pp}$ & $11.1\,\text{pp}$ \\
    4.0 & $31.1\%$ & $33.8\%\;(3.4)$ & $34.1\%\;(3.9)$ & $19.4\,\text{pp}$ & $20.6\,\text{pp}$ \\
    \bottomrule
  \end{tabular}
\end{table}

Table~\ref{tab:decision} reports decision probabilities averaged over 30
replicates.

\begin{table}[ht]
  \centering
  \caption{Decision probabilities
  $P\bigl(P(y=1 \mid d) - P(y=1 \mid 0) > 5\%\bigr)$,
  averaged over 30 simulated trials.}
  \label{tab:decision}
  \begin{tabular}{@{}lll@{}}
    \toprule
    Dose (mg) & BCJ & Flat \\
    \midrule
    0.5 & $0.000$ & $0.000$ \\
    1.5 & $0.621$ & $0.610$ \\
    2.5 & $0.894$ & $0.872$ \\
    4.0 & $0.954$ & $0.937$ \\
    \bottomrule
  \end{tabular}
\end{table}

The BCJ prior raises decision probabilities by 1.2--2.2 percentage points at
doses of 1.5~mg and above, with the largest absolute gain at 2.5~mg ($+0.022$).
The fitted linear logistic model underestimates mid-dose response rates relative
to the Emax data-generating process (BCJ mean 21.0\% at 1.5~mg versus true
26.6\%), since the linear link cannot capture the concavity of the Emax curve on
the logit scale; both priors are subject to this specification bias.
The results support a minimum effective dose between 1.5 and 2.5~mg.

Figure~\ref{fig:results} summarises both results. The left panel shows that
BCJ and flat posterior means are nearly identical and follow a plausible
dose-response shape; both underestimate mid-dose response relative to the true
Emax curve, reflecting the linear logistic model misspecification; BCJ credible
bands are visibly narrower at all doses.
The right panel shows that BCJ decision probabilities exceed the flat-prior
values at doses of 1.5~mg and above, with the minimum effective dose threshold
(80\% probability) crossed between 1.5 and 2.0~mg under both priors.

\begin{figure}[ht]
  \centering
  \includegraphics[width=0.48\textwidth]{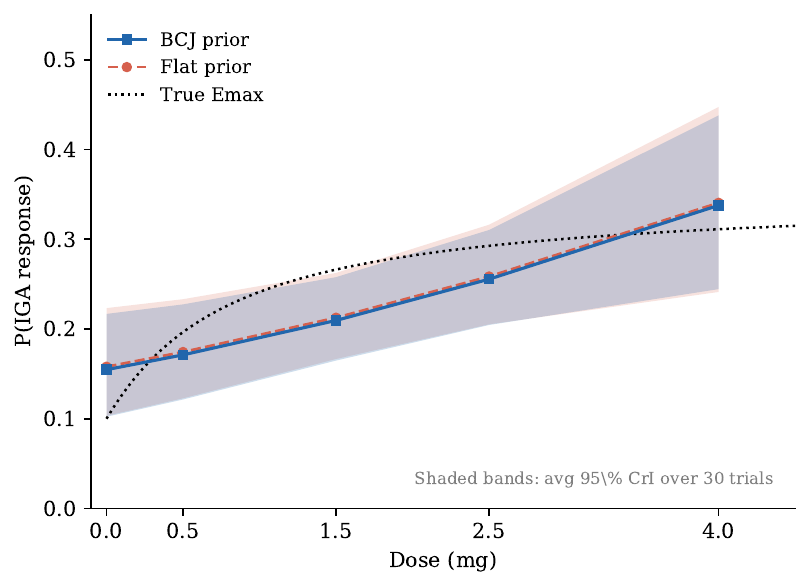}%
  \hfill
  \includegraphics[width=0.48\textwidth]{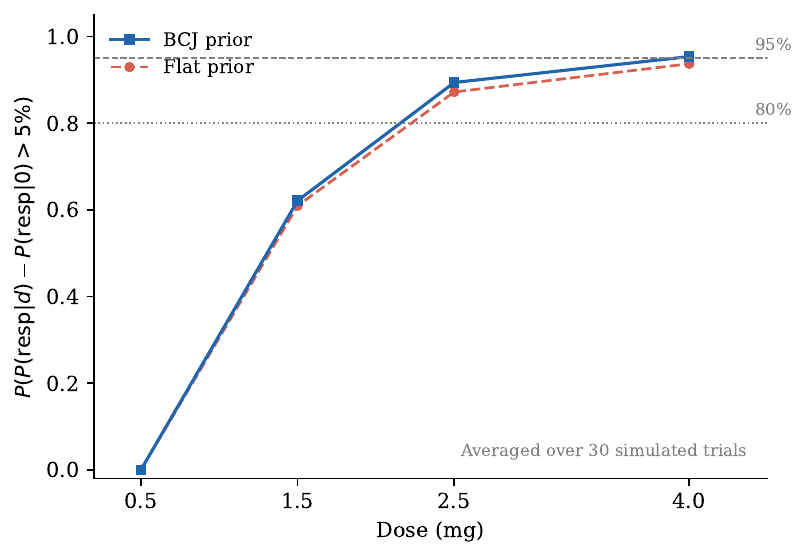}
  \caption{Atopic dermatitis dose-finding simulation, averaged over 30 trials.
  \textit{Left}: Posterior mean dose-response curve with average 95\% credible
  bands for BCJ (blue) and flat (orange dashed) priors; dotted black curve is
  the true Emax model.
  \textit{Right}: Posterior probability of clinically meaningful benefit over
  placebo ($>5$ percentage points) as a function of dose;
  horizontal reference lines at 80\% and 95\%.}
  \label{fig:results}
\end{figure}

\subsection{O-Ring Failure: Reproducing the BCJ Benchmark}
\label{sec:oring}

\citet{bedrick1996} illustrated the conditional means prior with the
Challenger Space Shuttle O-ring data \citep{dalal1989}: 23~pre-Challenger
launches recorded the temperature at liftoff (°F) and whether at least one
primary field-joint O-ring failed.  We re-examine this example using the
P\'{o}lya-Gamma Gibbs sampler.

\medskip
\noindent\textbf{Model.}
$\operatorname{logit} P(\text{failure} \mid \text{temp}) = \beta_0 + \beta_1
\cdot \text{temp}$, with $n = 23$ binary observations ($y=1$ for 7 launches
with at least one failure).

\medskip
\noindent\textbf{BCJ prior.}
Two design points anchor the temperature--failure curve:
\begin{itemize}
  \item Temp $= 31$°F (Challenger launch day, very cold):
    Beta$(8,2)$ $\Rightarrow$ prior mean $80\%$, weight $n_j=10$
  \item Temp $= 81$°F (warm, near maximum observed):
    Beta$(1,9)$ $\Rightarrow$ prior mean $10\%$, weight $n_j=10$
\end{itemize}
These beliefs are consistent with engineering assessments that O-ring
resilience degrades sharply at low temperatures.

\medskip
\noindent\textbf{Results.}
Table~\ref{tab:oring} reports posterior predicted failure probabilities at
selected temperatures from 10{,}000 Gibbs iterations (3{,}000 burn-in),
using the same P\'{o}lya-Gamma Gibbs sampler with a single fixed seed
(code in \texttt{code/oring\_analysis.R}).

\begin{table}[ht]
  \centering
  \caption{Posterior predicted O-ring failure probability at selected
  temperatures, Challenger data.}
  \label{tab:oring}
  \begin{tabular}{@{}lllll@{}}
    \toprule
    Temp (°F) & BCJ Post.\ Mean & BCJ 95\% CrI & Flat Post.\ Mean & Flat 95\% CrI \\
    \midrule
    31 (Challenger) & $87.3\%$ & $[62\%, 99\%]$ & $98.8\%$ & $[89\%, 100\%]$ \\
    50              & $64.9\%$ & $[41\%, 87\%]$ & $94.4\%$ & $[66\%, 100\%]$ \\
    65              & $36.4\%$ & $[20\%, 55\%]$ & $57.0\%$ & $[28\%, 85\%]$ \\
    75              & $20.7\%$ & $[ 9\%, 37\%]$ & $11.4\%$ & $[ 1\%, 32\%]$ \\
    81              & $14.3\%$ & $[ 4\%, 30\%]$ & $ 4.4\%$ & $[ 0\%, 21\%]$ \\
    \bottomrule
  \end{tabular}
\end{table}

At the Challenger launch temperature of 31°F, the BCJ posterior mean is
$87.3\%$ (95\% CrI: 62--99\%), versus $98.8\%$ (95\% CrI: 89--100\%) under
a flat prior.  The BCJ prior moderates the flat posterior's extreme estimate
by anchoring at the engineer's prior mean of 80\% at 31°F, while remaining
consistent with a high probability of failure.
At intermediate temperatures (50--65°F), the BCJ prior substantially narrows
the flat prior's wide credible intervals: the flat upper bound drops from
100\% to 87\% at 50°F and from 85\% to 55\% at 65°F, as both BCJ anchors
jointly constrain the feasible slope of the temperature--failure curve.
At warm temperatures (75--81°F), the BCJ prior pulls the posterior mean up
to 14--21\% versus the flat posterior's 4--11\%, reflecting the steeper
slope required to accommodate the cold-temperature anchor.
The BCJ credible interval at 81°F is also wider (4--30\%) than the flat
prior's (0--21\%), since the 31°F anchor forces a slope that maintains
non-trivial failure probability even at high temperatures.
Both posteriors agree that the Challenger launch carried a very high
O-ring failure risk.

\section{Discussion}
\label{sec:discussion}

Synthetic priors embed expert knowledge as pseudo-observations in the GLM
likelihood. The BCJ formulation elicits on the probability scale, which is
natural for practitioners. The P\'{o}lya-Gamma Gibbs sampler handles the
augmented dataset without modification, and the synthetic observations enter
identically to real observations with weight $n_j = a_j + b_j$.

Several extensions are immediate. First, the framework generalizes to
multinomial and ordered logit models: the conditional means prior places Dirichlet
priors on category probabilities at chosen covariate profiles, and generalized
P\'{o}lya-Gamma augmentation \citep{windle2014} provides the Gibbs sampler.
Second, the informativeness parameter $n_j$ can be shared across design points
to implement a globally calibrated prior strength, analogous to the scalar
$\tau$ in catalytic priors. Third, in sequential trial settings, earlier interim
data can update the synthetic prior between stages---the updated BCJ distribution
becomes the prior for the next stage, with posterior sample sizes replacing the
prior sample sizes $n_j$.

The connection to prediction-powered inference highlights a limitation of
synthetic priors: unlike PPI, they do not provide frequentist validity guarantees
when the expert beliefs are wrong. If the true placebo rate were 25\% rather
than 10\%, the BCJ prior with Beta$(1,9)$ at placebo would be misspecified and
could bias all posterior inferences. PPI's rectifier corrects for ML model error
automatically; BCJ priors have no analogous self-correcting mechanism. In
practice, sensitivity to the prior specification can be assessed by varying $n_j$
from 0 (flat prior) to larger values and inspecting whether conclusions are
stable---a standard Bayesian robustness check.

\bibliography{ref}
\end{document}